\numberwithin{equation}{section}
\newtheorem{theorem}{Theorem}
\begin{document}

\title{Widescope - A social platform for serious conversations on the Web
\titlenote{\texttt{http://widescope.stanford.edu}}}
%
%
%
%
%

\numberofauthors{6} 
%
\author{
%
%
\alignauthor
Noah Burbank\\
       \affaddr{Stanford University}\\
       \email{nburbank@stanford.edu}
\alignauthor
Debojyoti Dutta\titlenote{The work was done while the author was visiting Stanford}\\
       \affaddr{Cisco Systems}\\
       \email{ddutta@gmail.com}
\alignauthor Ashish Goel\\
       \affaddr{Stanford University}\\
       \email{ashishg@stanford.edu}\\
\and  
\alignauthor David Lee\\
       \affaddr{Stanford University}\\
       \email{davidtlee@stanford.edu}\\
\alignauthor Eli Marschner\\
      \affaddr{Stanford Univerisity}\\
       \email{eli@cs.stanford.edu}\\
\alignauthor Narayanan Shivakumar\\
       \affaddr{Stanford University}\\
       \email{shivak@gmail.com}
}
\date{7 November 2011}

\maketitle
\begin{abstract}

There are several web platforms that people use to interact and exchange ideas, such as social networks like Facebook~\cite{facebook}, Twitter~\cite{twitter}, and Google+~\cite{googleplus}; Q\&A sites like Quora and Yahoo! Answers; and myriad independent fora. However, there is a scarcity of platforms that facilitate discussion of complex subjects where people with divergent views can easily rationalize their points of view using a shared knowledge base, and leverage it towards shared objectives, e.g. to arrive at a mutually acceptable compromise.

In this paper, as a first step, we present Widescope, a novel collaborative web platform for catalyzing shared understanding of the US Federal and State budget debates in order to help users reach data-driven consensus about the complex issues involved. It aggregates disparate sources of financial data from different budgets (i.e. from past, present, and proposed) and presents a unified interface using interactive visualizations. It leverages distributed collaboration to encourage exploration of ideas and debate. Users can propose budgets ab-initio, support existing proposals, compare between different budgets, and collaborate with others in real time.

We hypothesize that such a platform can be useful in bringing people's thoughts and opinions closer. Toward this, we present preliminary evidence from a simple pilot experiment, using triadic voting (which we also formally analyze to show that is better than hot-or-not voting), that 5 out of 6 groups of users with divergent views (conservatives vs liberals) come to a consensus while aiming to halve the deficit using Widescope. We believe that tools like Widescope could have a positive impact on other complex, data-driven social issues.

\end{abstract}

\category{H.5}{Transformative Interfaces}{Miscellaneous}
\terms{User Interfaces, Budgets, Social consensus}
\keywords{User Interfaces, Crowdsourcing, User trials, Voting} 

\section{Introduction}
\label{sec:intro}

Social Networking sites have transformed our lives dramatically. Services like Twitter~\cite{twitter}, Google+~\cite{googleplus} and Facebook~\cite{twitter} draw hundreds of millions, who end up spending a significant fraction of their time in some social activity leveraging the features of the particular service~\cite{timespent}. Popular activities include sending a message to a group or tweet, chat with others, groupchat, share status and location updates, and using immersive services like group video chat (e.g. Google+ hangouts). Thus social networking tools are quite effective for basic social communications and information and  thought propagation within one's own social network.

Several activities that we perform in our daily lives are more involved that the aforementioned ones. For example, we debate serious topics within our circles like the economy, social change, globalization and governmental budgets. For these serious conversations, we can use the tools available for all social communication like forums. This might be adequate or even optimal for a great number of topics, especially those which are more qualitative or subjective which do not need a lot of data points for argument. However for some topics like the US Federal Budget (and the current deficit~\cite{federalbudget}), we believe that existing social networking mechanisms are not a good fit due to the {\em dynamic range} of the underlying data and issues stemming from those, as mentioned below:
\begin{itemize}
\item {\bf Data Slicing and Aggregation:} A big problem with discussions about the budget in the current status quo is that debates are centered around isolated topics. This is true not only among the average citizens, but also among politicians. In one given debate, the focus may be on taxes, while in another, the focus might shift to  healthcare. As a result, it is hard to pin down exactly what the effects of a candidates proposals are on the entire budget deficit. Thus we need a solution that encapsulates the entire aggregate of proposals which then can be leveraged by users for applications like proposing one's own budget and social consensus.
\item {\bf Different Baselines:} A second difficulty is that different baselines (such as the one from the Congressional Budget Office~\cite{cbo_options}, the President's Budget~\cite{pres_budget}, the House of Representatives' Budget~\cite{prosperity}, or the People's Budget~\cite{progress}) are used by different budget proposals. A budget baseline is an assumption on what the future will look like given certain policy continuations or lack thereof. When different baselines are used then it becomes far too difficult for non-economists to make comparisons. When the only information given is how much {\em improvement} specific positions are making, the conversations needs to compare changes in apples to changes in apples. Thus we need to separate budget positions from baselines by revealing the final dollar amounts that a specific position implies. 
\end{itemize}

The questions we ask in the paper are centered around novel approaches to enable serious conversations on the web and also to help people achieve shared goals. Here are our simple desiderata for such a platform:
\begin{itemize}
\item Users should be able to express their views around a complex idea that leverages rich and diverse data. For example, in order to express an opinion on the Federal Budget, she should be able to quantify proposals and present a reasoning for the data. Also users should be able to choose a popular proposal from another user or entity.  
\item Users should be able to compare different proposals using both a high level summarized view and as well zoom into individual proposal comments. They should be able to vote and view the top proposals among other proposals.
\item Users should be able to collaborate with others in real time to converge upon a shared consensus proposal.  
\end{itemize}
In addition, we ask that given a platform as described above, we ask what the potential social impact could be. For example, does it really help in online serious conversations? Can we use such a platform to achieve some shared goal around the conversation topic? What is a good way to compare ideas and proposals?

\subsection{Our Contributions}

In this paper, we present a platform called Widescope~\cite{widescope,wsjblurb} that enables online social conversations around the US budgets driven by rich data visualization. The platform has several features that handle dynamic range of the data, allow for comparisons and enable collaboration in real time. Using our platform, users can propose a budget by either entering their own proposals or choosing an existing one. Also, users can compare and vote for different budgets, and visualize their budget overlayed on another proposed budget, in real time.

We then present a very simple model of a proposal where an opinion can be mapped onto a single-dimensional point on a line and analyze common voting schemes~\cite{voting} in the model. We formally show that a triadic voting scheme is better than a naive hot-or-not voting scheme~\cite{hotornot,realhotornot} for picking a  winner among those who have moderate views,  and are likely to help users to reach a concensus~\cite{condorcet}.

Then we describe a pilot user trial which leverages this platform where we design an experiment that enables users with divergent opinions to reach a consensus given a shared goal. In this case our goal was to halve the deficit of the current Federal Budget! We use triadic voting followed by a collaborative step to show that 5 out of 6 groups of the users were able to achieve consensus when paired with a user with divergent views (users had declared whether they were conservatives or liberals).

We believe that this paper is a first step and scratches the surface of the huge opportunities on the web for serious conversations around complex topics. The potential social impact using these platforms could be very significant. Conversations could be designed and structured to help users reach consensus through mutual understanding and a focus on facts and thereby increasing the efficiencies of social decision making.

\subsection{Related Work}

Widescope is based on work that is highly interdisciplinary, and leverages ideas from online social networking, visualization, user interface design, web based system design, social science, psychology and social choice theory. In addition, for the specific use case of the US Federal Budget, we compare our work with the existing mechanisms available to users.

While there are several research directions related to the different aspects that Widescope touches, we are not aware of any platform that does what Widescope does - i.e. enable serious conversations around budgets using a data-driven, social platform. 

\subsubsection{Budgets}

There are three main projects of note that have focused on visualization of governmental budget data and motivated individual expression and interaction with that data. First is the New York Times' Budget Puzzle~\cite{nytimesbudgetpuzzle}, which had a number of mutually exclusive deficit reduction options and a novel grid-based visualization. This website, and its popularity was a source of inspiration for the Widescope project. 

Second is the NextTen's California Budget Challenge~\cite{budgetchallenge}, which allows people to choose from a small number (usually 2-4) of deficit-reducing options in about twenty key categories. We thank the Program Director Sarah Henry for sharing how the Budget Challenge started off as a {\em spreadsheet}, but that focus group experiments lead to greater engagement using the options format, much like the NY Times project. We hope to allow the same detail as the {\em spreadsheet} while encouraging engagement like the NY Times project. 

The third is the Washington State Budget~\cite{washbudget}. This site had a user interaction similar to the NY Times, although with a slightly different graphical output. Mr. Itti provided useful feedback during iterations that lead to our current website. It is not uncommon for the state budget transparency requirements to make it to the  office of management, and budget employees getting excited about visualization tools like Widescope, however none seem to employ the level of quantitative decision making nor the breadth of budget change options that Widescope offers; Widescope offers nearly infinitely many ways to adjust the budget.

\subsubsection{Social Choice}

Widescope is related to research in information design, and social choice theory. Social choice theory~\cite{socialchoice} is a theoretical framework for studying aggregate choices for collective decision making. Although this has been a rich research area for decades, most of the findings have been in the area of so-called {\em impossibility results} which illustrate situations in which it is impossible to aggregate individual opinions into a collective opinion~\cite{arrowimpossible}. Accepting this research, we decided that the voting mechanism was not the problem in the polarization of political debate, it was that people were often talking across one another. Studies dating back to the 1960’s have demonstrated that people process identical information differently with different visual presentations ~\cite{knox62}. Thus we hypothesized that if we could improve the conversation, especially through better visualization of budget data to make processing the data more natural, then the resulting data-driven debate might solve the polarization better than a novel voting mechanism. 

In addition to the rich data visualization, Widescope has schemes for voting~\cite{voting} in order to aggregate individual choice. We show how novel voting schemes like our triadic scheme presented in Section~\ref{sec:triadic} can lead to the probability distribution of the winning proposal to be concentrated around the median proposal, which satisifies the Condorcet Criterion~\cite{condorcet}. 

\subsubsection{Computer Human Interaction}

Our work is related to recent work in the area of {\em Computer Human Interaction (CHI)}. Widescope shares common design elements with Pathfinder~\cite{ui:pathfinder}, an online platform that allows users to collaboratively discuss and analyze data. Users can initiate topics that are then augmented by other users with data and analysis. Unlike Pathfinder, our tool is built specifically for the purpose of discussing budgets, uses detailed visualization by default, and has a real-time online collaboration feature. Voyager and Voyeurs~\cite{ui:voyager} shows that the interplay of asynchronous collaboration and visualization is a powerful influencer for user engagement. Work on CommentSpace~\cite{ui:commentspace} showed how adding lightweight tags and link structure to comments can help groups of users collect and analyze data in a collaborative fashion. We assume more about the structure of our data, and take advantage of that to a greater extent than a generic platform like CommentSpace llows without significant modifications. Because our platform targets a more specific use case it allows more sophisticated UI elements for directly comparing and contrasting specific items of data. 

Our work is inspired by such previous studies. In addition, we have extended these concepts to include ideas like synchronous collaborations and a propose/accept interaction metaphor, which we believe is a key differentiation. We allow users to build their own proposals by picking and choosing among others' proposals, as opposed to requiring that users generate original opinions. This approach has potential to reduce the barrier to entry for new users, and decrease duplicate, {\em noisy} commentary. Furthermore, because proposals are treated as first-class objects within the system it's possible to rank them by metrics like overall popularity, and to track their dispersion through the social networks using the tool.

\section{Widescope Platform}

In this section, we describe our platform in detail and demonstrate
a user interface tailored specifically for enabling serious conversations about the US federal and state budgets. The platform was built using a collection of open source software libraries; most notably: Ruby on Rails~\cite{ror}, Protovis~\cite{protoviz}, and JQuery~\cite{jquery}, which were chosen for their widespread adoption and/or unique capabilities. Our data backend is a replicated MySQL~\cite{mysql} environment optimized for read scalability. All components are hosted on Amazon EC2~\cite{ec2} virtual servers.

Collaborative budgeting is made difficult by confounding
factors such as party affiliation, ignorance of important, hard-to-find facts,
and mis-communication between dissenting participants.
Widescope provides structured channels for interaction and debate, and eschews expression of political
affiliation or similar biases in a deliberate effort to minimize
these factors.

The interface affords:
\begin{enumerate}
\item Specification of proposals for adjusting budget
items to specific dollar amounts
\item Comparing and commenting on those proposals in
a structured, quantified way
\item Mixing-and-matching different proposals into a
personalized collection that amounts to a holistic
proposal for an overall budget
\end{enumerate}

This proposal-centric structure serves to focus and
synchronize users' attention, and provides a framework
for relating different points of view. For example, some
people may not have any idea about how to adjust multiple
items within a specific budget, but they might know a lot
about issues relating to a particular budget item. They are
free within the tool to express their deep knowledge on
a particular subject by creating a comprehensive set of
proposals without being pressured to figure out which other
budget adjustments they complement. Other users can mix
and match such peer-generated proposals.
Affording various levels of interaction such as these allows
the majority of people to have a voice and build shared
understanding without requiring more expertise than they
might actually have -- or having to think about the issues at
hand in an unnatural way.

\subsection{Data Visualization}

\begin{figure}[htb]
\label{fig:ui}
\centering
\includegraphics[width=3in]{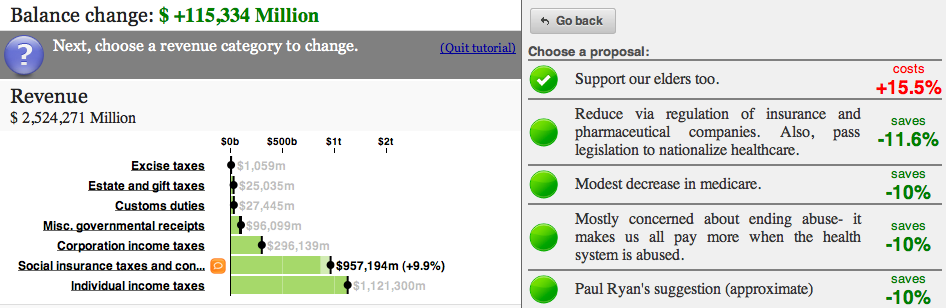}
\caption{UI instrumented with walk-through tutorial prompts (left), showing adjustment suggestions (right).}
\end{figure}

\begin{figure}[htb]
\label{fig:barchart}
\centering
\includegraphics[width=3.25in]{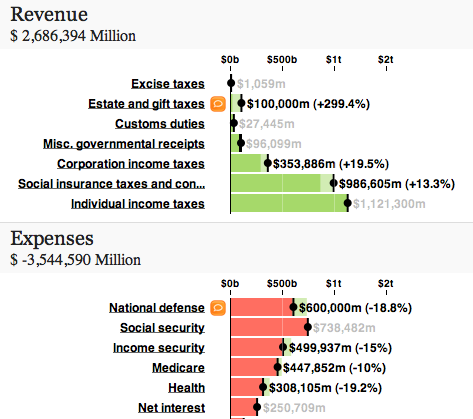}
\caption{Interactive bar chart visualizing budget allocations.}
\end{figure}

\begin{figure}[htb]
\label{fig:proposalform}
\centering
\includegraphics[width=3in]{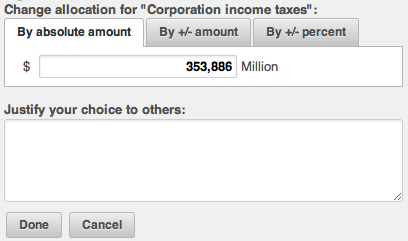}
\caption{Precise budget proposal entry form}
\end{figure}

Widescope's primary feature for expressing budget data is an
interactive bar chart (Figure~\ref{fig:barchart}) that (a) shows the raw dollar amounts
for budget items -- relative to each other and grouped as either an
expense (negative) or revenue (positive) -- and (b) affords inline
adjustment and experimentation via a click-and-drag interaction.

Because users may not know how to begin modifying a budget, 
suggested adjustments are pulled from the pool of existing proposals (Figure~\ref{fig:ui}). If a user
simply wants to learn more about a particular part of a budget there
are category-level descriptive popups that are triggered by hovering
over the labels for budget item bars (e.g. "Health", "National
Defense", "Income Tax"). Fine-grain adjustments are possible
using a form (Figure~\ref{fig:proposalform}) that includes fields for specifying the exact
amount of an adjustment and text for the rationale behind it.

Proposals used by a budget are displayed in a panel next to the interactive bar chart, and comments on individual proposals from other users are displayed next to those. The same panel contains a drop-down menu listing other budgets
that use each proposal. This provides a clue as to how popular a particular proposal is, and a way to navigate to potentially similar budgets. There is a number displayed prominently above the bar charts that represents
the difference, positive or negative, of the adjusted budget relative
to the original budget. This provides an overview metric
indicating how significant all adjustments made so far are to the
bottom line of the budget in question.

When thinking about a budget, users may consider different topics at varying levels of specificity. 
Considerations at a very basic level could include: what budget is being adjusted (e.g. Federal? California?
Nebraska?), and whether or not that budget is relevant to a given user. 
More complex considerations include: how balanced is a budget overall, what is
the balance between certain spending and revenue items within that
budget, and what nuanced reasoning might be behind specific monetary allocations (e.g. a 5\% increase in
income tax revenue, sourced specifically from people earning more than \$250,000
per year).  All but this last, most fine-grained level of specificity
are addressed directly by the bar chart visualization and its
supporting interface elements. The latter, more nuanced kind of consideration, is relegated
to free-form text, either as the content of a proposal's explanation or comments made on a proposal, or both.

\begin{figure}[htb]
\label{fig:commentform}
\centering
\includegraphics[width=3in]{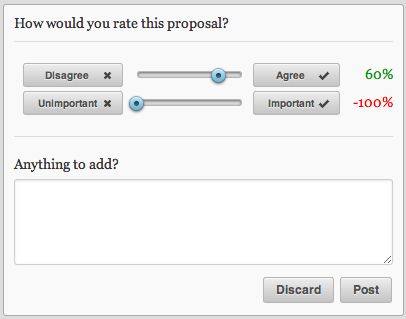}
\caption{Comment entry form with sliders for agreement and importance.}
\end{figure}

\begin{figure}[htb]
\label{fig:comment}
\centering
\includegraphics[width=2.5in]{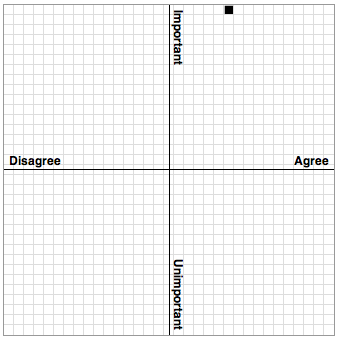}
\caption{Comment agreement and importance matrix.}
\end{figure}

The commenting system invites users to quantify otherwise free-form feedback by indicating amount of agreement or disagreement with a given proposal, as well as perceived importance or the lack thereof (Figure~\ref{fig:commentform}).  This feature is meant to encourage structured
commentary on relevant issues, and to provide concrete numerical data
points for further analysis. This feedback is
represented on a 2 dimensional scatter plot with axes for
agreement/disagreement (horizontal axis), and importance/unimportance (vertical axis) (Figure~\ref{fig:comment}). Precise ratings are aggregated within discrete grid cells, which vary in tone from light grey to black depending on the amount of overlap among aggregated ratings. A cell representing more ratings with appear darker than one representing fewer.
This provides a visually cohesive representation of the
discussion about a particular proposal, and can be used as a high-level
summary of sentiment in case commentary grows too large to be read and thoroughly understood.

\subsection{Voting and Collaboration}

\begin{figure}[htb]
\label{fig:voting}
\centering
\includegraphics[width=3.25in]{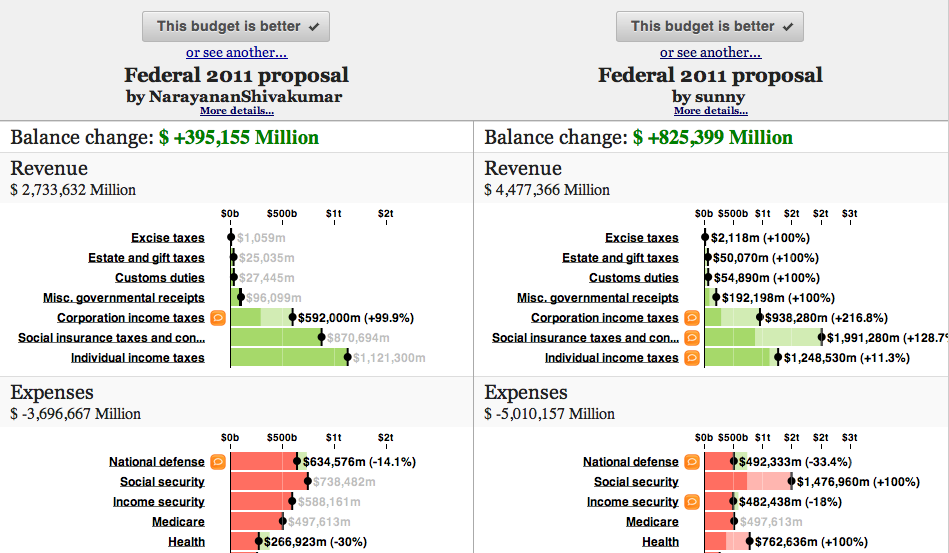}
\caption{Budget comparison and voting page.}
\end{figure}

A significant challenge facing official budget
administrators is that they are faced with their task having little or no relevant prior
experience, especially when freshly elected. Anyone using Widescope for the first time faces a similar challenge: ``Where do I start and what am I
even supposed to do?" The tool has specific features to address this
challenge: a guided walkthrough (Figure~\ref{fig:ui}), which is
the default mode for new users; proposal suggestions (always given prominence over entering new, from-scratch proposals); and comparison voting. The latter works by
showing a user two budgets for the same state side-by side,
and inviting users to select which one is {\em better} (Figure~\ref{fig:voting}).
This serves as a low-friction voting interface, as users don't
have to absolutely agree with a budget to vote for it; they
just have to agree with it more than the other budget. It
provides an easy mode of interaction so that even users without much to contribute to the budget-making process can contribute by curating others' budgets.

\begin{figure}[htb]
\centering
\includegraphics[width=3.25in]{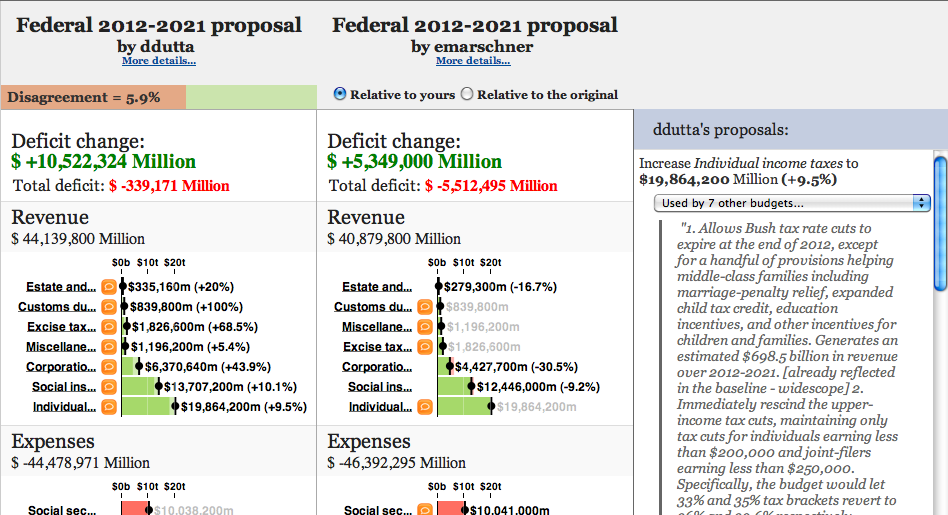}
\caption{The collaboration page. Two users can collaborate in real time. The user's budget is displayed on the left side and the other user\'s budget is overlayed either on top of her budget or on the baseline budget.}
\label{fig:collab}
\end{figure}

Because rapid feedback and personal dialog are key factors for effective debate, Widescope has a specialized interface for real-time collaboration. The collaboration interface enables users who are simultaneously logged in to send continuously updating budget changes to each other via the interface depicted by Figure~\ref{fig:collab}. A user's own budget is displayed with the left-most column of charts, and the other user's budget, represented by the right-most column of charts, can be overlaid either on top of her budget or on the baseline budget. As a user moves her chart's bars and selects various proposals, the behavior of the interface is exactly as if she was editing her own budget, but the chart on the right representing the other user's budget updates simultaneously with her changes (if viewing the other's overlaid on hers), and with any changes made by the other user.  The amount of {\em Disagreement} between each user's budgets is represented as a simple horizontal percentage meter above the charts to the left, where {\em Disagreement} is defined as:

Given users \emph{A} and \emph{B}, Given category \emph{C} $\in$ Revenue and Expense categories
$A_{c}=$ Dollars allocated to \emph{C} by User \emph{A}
$B_{c}=$ Dollars allocated to \emph{C} by User \emph{B}
$\Delta_{c}=A_{c}-B_{c}$

$\text{Disagreement}=\frac{\sum_{c}|\Delta_{c}|}{\sum_{c}\frac{|A_{c}|+|B_{c}|}{2}}$

Any amount of {\em Disagreement} beyond 10\% is considered quite significant and displayed simply as $>$10\%.

Though these mechanisms for collaboration are primitive, they allows two users to visualize each others' budgets in real time, relative to their own budgets, with a choice of overlays. This enables a different kind of comparative budgeting than previously described for times when users may care not only about which of two budgets is {\em better}, but also about how changes to specific categories of one budget relate to those of another, and about having an active conversation about otherwise static data.

\section{Triadic Voting Schemes}
\label{sec:triadic}

The purpose of voting is to take voter opinions over a candidate set and produce a good candidate, as measured by various properties. Given the comparison functionality in Widescope, it is natural to ask if there are voting schemes that can take opinions, expressed as comparisons between budgets, and aggregate them together in a way to choose a good budget. In this section, we propose a comparison-based voting scheme, Triadic Voting, and compare it to a commonly used comparison-based voting scheme, Hot-or-Not. We demonstrate that Triadic Voting is more efficient at eliminating outliers and converging to good solutions compared to the typical Hot-or-Not solution.

{\it Triadic Voting} is an iterative voting scheme that distinguishes between moderate and strongly polarized proposals. We compare this to a iterative version of the Hot-or-Not scheme. In the standard Hot-or-Not scheme, two proposals are randomly presented to a user who selects the one that she likes the most\footnote{Note that Hot-or-Not sometimes refers to an absolute rating of single proposals on some scale. We are referring to the comparison-based variants~\cite{realhotornot}}. Based on several of these votes, we eliminate the losing proposals and repeat. The general idea of an iterative voting scheme is that highly polarized proposals are also generally liked by a smaller portion of the population. When pitted against another moderate proposal, a highly polarized proposal will lose more often than it wins and hopefully, highly polarized proposals can be quickly eliminated.

This can easily seen and formalized if the proposal space is {\it single-peaked}, i.e., the proposals can be approximated by a one-dimensional number. Assume that voter opinions are points on a line and that each voter gives a proposal corresponding to the point it lies on. Any voter will like proposals that are closer to it as per the distance between the two proposals. In this case, the median proposal (which is also intuitively the most moderate proposal) satisfies the {\it Condorcet property}, which means that it would win any pairwise election against another proposal. This is because at least half of the voters will vote for the given proposal. If the opposing proposal is on the right side of it, then the median proposal will win all the voters on the left and vice versa.

In the Triadic voting scheme we propose, three voters $x, y, z$ are randomly selected, each of which has his or her own proposal. Then, each voter is made to vote between the other two proposals. That is, $x$ votes between $y$ and $z$, $y$ votes between $x$ and $z$, and $z$ votes between $x$ and $y$. This is similar to Hot-or-Not in that users are making pairwise comparisons. The difference, however, is that the comparisons being made are correlated in a very precise way depending on the group of three candidates that are chosen. Intuitively, we can see a nice property in this voting scheme. If one highly polarized proposal exists, it is unlikely to convince two other voters to vote for it and be the winner in this group of three. The `median' proposal out of the three is going to win, so a highly polarized proposal doesn't stand a chance. Below, we show a simple proof demonstrating this comparison.

\begin{figure*}
    \centering
    \subfloat[][$f(x) = 1$]{
    \includegraphics[width=.3\textwidth]{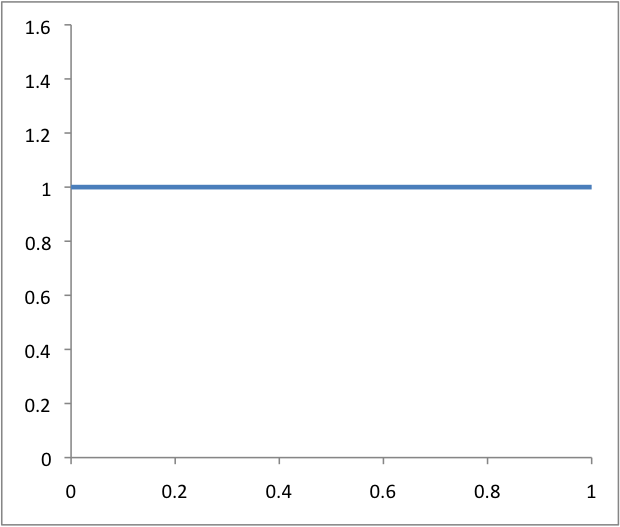}
    }
    \subfloat[][$g_{\text{Triadic}}$]{
    \includegraphics[width=.3\textwidth]{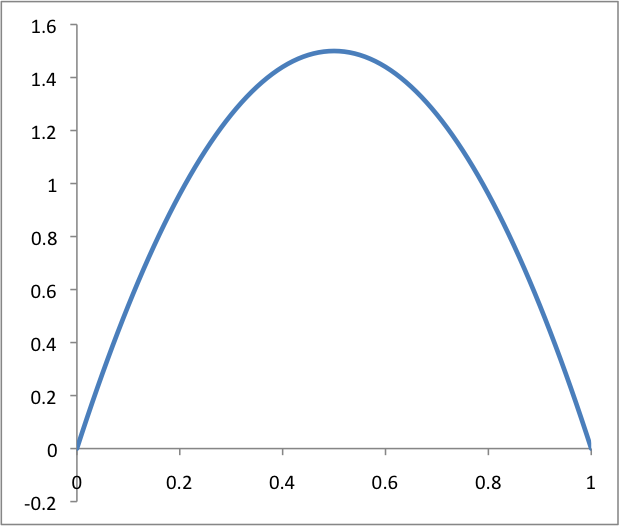}
    }
    \subfloat[][$g_{\text{Hot-or-Not}}$]{
    \includegraphics[width=.3\textwidth]{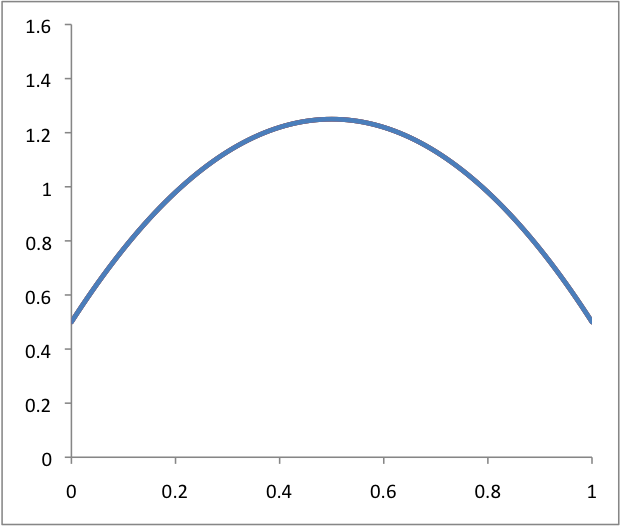}
    }
    \caption{(a) The original distribution of proposals, (b) The distribution of the proposal after one round of the Triadic voting scheme, (c) The distribution of the proposal after one round of the Hot-or-Not voting scheme}
    \label{fig:VotingScheme}
\end{figure*}

\begin{theorem}
Let a continuous distribution of voters be uniformly distributed between zero and one. Let $g_{\text{Hot-or-Not}}(x)$ and $g_{\text{Triadic}}(x)$ be the probability density of $x$ being the next winning candidate in the Hot-or-Not and Triadic voting protocols respectively. Then,
\begin{align*}
g_{\text{Triadic}} &= 6x(1-x)\\
g_{\text{Hot-or-Not}} &= 3x(1-x) + \frac{1}{2}
\end{align*}
In particular,
\[g_{\text{Hot-or-Not}} = \frac{1}{2}g_{\text{Triadic}}+\frac{1}{2}f(x)\]
where $f(x) = 1$ is the uniform distribution.
\end{theorem}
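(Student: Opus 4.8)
The plan is to compute each of the two densities by conditioning on which of the randomly chosen points plays the role of the candidate $x$, and then to observe that the claimed decomposition is pure algebra once both closed forms are in hand.

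First I would pin down the Triadic density. The key structural fact is that in any triple with values $a \le b \le c$, the median $b$ is always the winner: the voter at $a$ compares $b$ and $c$ and picks $b$ (since $b-a \le c-a$), the voter at $c$ compares $a$ and $b$ and picks $b$ (since $c-b \le c-a$), so $b$ collects at least two of the three votes; ties among the three points have probability zero since the voter distribution is continuous. Hence the next winner is exactly the median of three i.i.d.\ $\mathrm{Uniform}[0,1]$ draws, whose density is the second-order-statistic density $\frac{3!}{1!\,1!}\,x(1-x) = 6x(1-x)$, giving $g_{\text{Triadic}}(x) = 6x(1-x)$.

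Next I would model one round of Hot-or-Not as drawing three i.i.d.\ $\mathrm{Uniform}[0,1]$ points: two presented proposals $p,q$ and one voting user $v$, with the winner being whichever of $p,q$ is closer to $v$. The density of the winner at $x$ is the density of the event ``$p=x$ and $v$ prefers $p$'' plus that of ``$q=x$ and $v$ prefers $q$''; by the $p \leftrightarrow q$ symmetry these contribute equally, so $g_{\text{Hot-or-Not}}(x) = 2\,\Pr\!\big(|v-x| < |v-q|\big)$ with $v,q \sim \mathrm{Uniform}[0,1]$ independent. I would evaluate this probability by noting that $v$ is closer to $x$ than to $q$ exactly when $v$ lies on the $x$-side of the midpoint $(x+q)/2$, then splitting the integral over $q\in[0,x]$ and $q\in[x,1]$ (using $(x+q)/2 \le 1$ throughout, so no truncation at the right endpoint is needed); this yields $\Pr(|v-x|<|v-q|) = \tfrac32 x + \tfrac14 - \tfrac32 x^2$, hence $g_{\text{Hot-or-Not}}(x) = 3x(1-x) + \tfrac12$.

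Finally the decomposition is immediate: $3x(1-x) + \tfrac12 = \tfrac12\big(6x(1-x)\big) + \tfrac12\cdot 1 = \tfrac12 g_{\text{Triadic}}(x) + \tfrac12 f(x)$. The only step requiring genuine care is the Hot-or-Not integral — getting the midpoint geometry and the two cases right and checking the endpoint behaviour — but it is a routine one-variable computation once set up. As a sanity check I would also verify that both densities integrate to $1$ on $[0,1]$, which confirms the modelling choices are coherent.
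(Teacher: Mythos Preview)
Your proposal is correct and follows essentially the same route as the paper: both arguments identify the Triadic winner as the median of three i.i.d.\ uniforms (yielding the order-statistic density $6x(1-x)$), and both compute the Hot-or-Not density by fixing the candidate at $x$, picking up a factor of $2$ from the symmetry between the two proposal roles, and integrating the midpoint-based voter probability over the opposing proposal split at $q<x$ versus $q>x$. Your added remarks (no truncation needed since $(x+q)/2\le 1$, and the integrate-to-one sanity check) are nice refinements but do not change the underlying approach.
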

\begin{proof}
In a round of Triadic voting in a one-dimensional proposal space, the median point will always win. This is clear because the right-most point and the left-most point will vote for the median point. Regardless of who the median point votes for, he still wins with two votes. Then,
\[g_{\text{Triadic}} = 3!f(x)F(x)(1-F(x))\]
where $F(x)$ is the cumulative distribution function of $f(x)$. This is because $3!$ is the number of ways to choose three points, $f(x)$ is the probability of choosing point $x$, $F(x)$ is the probability of choosing a point left of $x$, and $1-F(x)$ is the probability of choosing a point right of $x$. So the above is the probability of $x$ being chosen as the median point, which is its probability of winning. For a uniform distribution of voters, $f(x) = 1$ and $F(x) = x$, so
\[g_{\text{Triadic}} = 6x(1-x)\]

In a round of Hot-or-Not voting in a one-dimensional proposal space, the point $x$ will win against $y$ only if the voter $z$ is closer to $x$ than $y$. Then,
\begin{align*}
    g_{\text{Hot-or-Not}} &= 2f(x)\int_0^x f(y)\left(1-F\left(\frac{x+y}{2}\right)\right)\mathrm{d}x\\
    &\ + 2f(x) \int_x^1 f(y)F\left(\frac{x+y}{2}\right)\mathrm{d}x 
\end{align*}
where we have a factor of $2$ instead of $6$ because it matters which point is the voter, so there are only two ways of choosing. For a uniform distribution of voters, we have
\begin{align*}
    g_{\text{Hot-or-Not}} &= 2\left[\int_0^x \left(1-\frac{x+y}{2}\right)\mathrm{d}x + \int_x^1 \left(\frac{x+y}{2}\right)\mathrm{d}x\right]\\
    &= 2\left[\left(1-\frac{x}{2}\right)y\Big|_0^x - \frac{1}{4}y^2 \Big|_0^x + \frac{x}{2}y\Big|_x^1 + \frac{1}{4}y^2\Big|_x^1\right]\\
    &= 2\left[\left(x-\frac{x^2}{2}\right) - \frac{1}{4}x^2 + \frac{x}{2} - \frac{1}{2}x^2 + \frac{1}{4} - \frac{1}{4}x^2\right]\\
    &= 3x(1-x) + \frac{1}{2}\\
    &= \frac{1}{2}g_{\text{Triadic}}+\frac{1}{2}f(x)
\end{align*}
\end{proof}

In the above theorem, we show the probability that any given candidate budget is not eliminated after one round of voting. The distributions are plotted in Figure~\ref{fig:VotingScheme}. It is quite clear that the Triadic voting scheme is significantly better than the Hot-or-Not scheme at eliminating proposals that are close to the boundary and centering the probability around the median candidate.

%
%
%

\section{User Trials - Reaching Consensus}

As mentioned in Section~\ref{sec:intro}, it is hard to find web platforms that facilitate discussions, especially when the issues are complex and contentious. For example, in the case of budgets, it is difficult to express views in standard fora due to the rich financial data coupled with the numerous proposals made by other people. It is especially hard to construct one's opinion or budget from the data while agreeing or disagreeing with others' proposals for different categories.

In this section, we ask the social implications of a web platform that facilitates discussions on complex issues. For example, we would like to know if users actually changed their mind after collaborating with others on budgets using the Widescope platform. In particular, we will also demonstrate the use of the triadic voting scheme, the properties of which were analyzed in the previous section.

\subsection{Experiment}

In order to answer the above question, we conducted a pilot user trial with twelve subjects as follows. First, we set a goal that can be quantified, i.e. we wanted users to reduce the federal budget deficit by half. Second, we gave our subjects a limited time to make decisions (two hours). This leads to a much more controlled experiment. We divided the total time into rounds. The major steps for the study were:
\begin{enumerate}
\item Setup: The users were classified as either conservative or liberal based on a survey with simple yes/no answers. We chose twelve candidates of which half were conservatives and the other half liberal.  
\item Round 1 (30 minutes):  Users are given a list of internet resources  (e.g. for learning about the US Budget and an explanation of how to use the platform Widescope for creating a budget. Users then put together their own proposal on how to solve the US budget (along with their own justifications).
\item Round 2 (20 minutes): Four groups of three people are formed randomly. In these groups, two belong to the conservative camp and one to the liberal camp or vice versa. In each group, each member gets five minutes to explain their position to the others within the same group. Then, each member votes for one of the other two proposals. We expected that the two users from the same camp would vote for each other and the user from the minority camp would vote for the more moderate proposal from the majority camp. In this way, we would have identified a strongly leaning conservative/liberal (SC/SL) and a moderately leaning conservative/liberal (MC/ML). This is our triadic voting. 
\item Round 3 (30 minutes): We then pair up SC and SL users, MC and ML users, and the minority conservative and liberal users. There are two of each of these pairs, with the minority conservative/liberal groups acting as controls. Now, we ask each of these pairs to collaborate and present one consensus budget (a budget that both agree on) that cuts the deficit by at least 50\% or as much as they are able to.
\item Surveys: We presented them with survey questions at the end of each round about the following: 1) strength of opinions, 2) opinions on individual income taxes, 3) opinion on entitlement funding, 4) opinion on corporate income taxes, 5) opinion on national defense. At the end of Round 3, we also ask them how successfully they felt like they were able to come to consensus.
\end{enumerate}

\subsection{Results}

Our first observation was that in round 2, users did not vote according to our expectations. Often, a user from the majority camp voted for the proposal given by the minority member i.e. in a group with two liberals and a conservative, the liberal actually chose the conservative's budget! In fact 5 out of the 8 majority members voted for the opposing camp. 

We observed that many did not know a lot about the US budget, based on the basic nature of the questions we were asked during the trial. We conjecture that many citizens do not know  a lot of the details of the US budget and their opinions might be derived from opinions of friends. However, we believe, that once they are given access to educated information, they will make decisions based on the information and surprisingly, they do not just stick to their original position as is often unfortunately observed in Internet forums (and on capitol hill!)

We observed that users were highly successful at coming to consensus in the third round. 5 out of 6 groups came to consensus solutions within the time frame. The last group felt that they could have come to consensus if they had a little more time. 10 people rated 5 for how strongly they were able to come to consensus, 2 people rated 4. 10 people said that given more time, they would come to a 50\% consensus, 2 people said the other person was reasonable but had fundamentally different views. This was a big surprise to us, and is the key empirical finding of this paper.

We noticed that several users had stronger opinions by the end of the user trial.  Average strength of opinions (1-5) changed from 2.75 to 3.5 to 3.67 from pre-trial, post voting (Round 2), and post consensus (Round 3) stages respectively. Looking at it in more detail, 7 users had a stronger opinion by the end of the user trial, 4 users had the same strength of opinion, 1 user had a weaker opinion (i.e. her score went to 2 from a 3). 
We believe that after people have become more educated, and have obtained more information, they have a stronger opinion since they feel it is backed up with evidence.

Another very interesting observation is that people actually flipped sides for several proposals during the trial! Looking at each of the opinions (from the surveys), we had that: 
1) Individual taxes: 7 had the same score before and after, 3 mid ranked users (with a score of 3) moved to moderate positions on both sides (a score of either 2 or 4), 1 user become more strongly opinionated (increased score from 4 to 5), 1 user went from  moderate to neutral (score 3 changed to 2).
2) Entitlement funding: 3 had the same score before and after, 4 became stronger in their original position (i.e. either increased or decreased their scores further), 2 moved from moderate positions to neutral, 1 moved from neutral to moderate, and 2 flipped sides completely! (e.g. score changed from 5 to 2, or score changed from 2 to 4).
3) National Defense: 4 had the same position before and after, 3 become stronger in their opinion (score decreased ffrom 2 to 1), 3 flipped sides (score of 4 became 5 and score of 2 became 1), 1 became neutral (score 4 decreased to 3), 1 became more moderate (score changed from 5 to 4). In the case of national defense, all the movement was one sided towards the direction of decreasing military spending. This is surprising given that we had an even mix of liberals and conservatives.
4) Corporate taxes: 6 had the same position before and after, 1 moved to neutral (score decreased from 4 to 3), 1 moved to more strong (score changed from 4 to 5), 2 moved away from neutral to strong (score 3 increased to 5), 1 moved to weaker (score 5 decreased to 4), 1 flipped sides (score 4 became 2).
It is especially amazing to see that people flipped sides in some topics (in both directions). It is also interesting that for certain areas like national defense, every person moved in one direction (this is also because both conservative and liberal parties believe that military costs should be cut, just in differing amounts). 

Thus, our observations confirm our hypothesis that tools like widescope might be able to help users to formulate their opinions and have a conversation about a complex subject. Also it was very interesting to note that our tool helped users to change their opinions upon analyzing the rich financial data, instead of forming opinions based on qualitative beliefs. Also, contrary to our expectation, we found that almost all groups of conservative-liberal pairs were able to come to consensus through using our website as a collaboration tool. This is a result that gives hope to a problem that didn’t seem solvable and points to many future questions in collaborative website design. 

\subsubsection{Discussions}

During our user trial, we found the users to be generally reasonable. Here are a couple of further observations:
\begin{enumerate}
\item Widescope gives people good common data they can research and have discussion based on a common understanding. Typical websites for conflict-resolution do not have this data, so users can only resort to online  shouting matches to get their voices heard.
\item Widescope captures the entire opinion through one concept. Typical websites like online fora are unstructured. As a result, people cannot observe the effects of their opinions. Often they go back and forth on different issues i.e. a user may have many good reasons for cutting taxes, but when they do so, they will have to see that cutting taxes by x\% will make it so much harder to decrease the deficit. However with widescope they can see the effect visually and they're forced to consider the budget as a whole. 
\item Widescope helps users be more rational. When a user proposes a change in a budget subcategory, the system displays a variety of proposals to choose from, thus giving the user a chance to rationalize. Thus our platform using subtle cues to help be more rational and consider other's proposals. 
\item Widescope allows us to have a {\em goal} based consensus via collaboration. On most websites, the goal is more to exert one's opinion. Here, if the other person disagrees, then it will be evident during a collaboration (via the disagreement meter). 
As a side note, this also points to aiding the solution of real problems with average citizens. Though the budget committee and congress members are more informed and spend their complete attention on these issues, they might have their own biases. Average citizens, though less informed, as an aggregate, may have  fewer biases and self interests. Thus widescope can be viewed as a platform for crowdsourcing of ideas on complex issues. 
\end{enumerate}

Media is often seen to polarize people. People often pick out the information that supports their viewpoint, etc. However, in our user trial, we saw a positive effect of leveraging the web. We found that people become more strongly opinionated when they receive more information, but they do not necessarily need to be {\em polarized} by it. When placed in the context of collaborative discussion, rather than being {\em talked at} by the media, results can go both ways. 

\section{Future Directions}

As we mentioned before (Section~\ref{sec:intro}), we have touched the tip of the iceberg in designing systems that enable serious conversations on complex data driven topics via consensus and collaboration. It is a new area and is currently being investigated in the following directions:

{\bf UI improvements:} We are continuing to improve the UI. The collaboration tool is being enhanced to include the status of users and suggestions of users one might want collborate with. We are also prioritizing and ranking the list of the proposals that are displayed when users change their own proposals. We are also introducing micro proposals. 

{\bf Large scale user trials and experiments:} Given the surprises in the small user trial described above, we are inspired to try bigger trials with the tools we have developed. While we can collaborate in real time using our tool, we need to build another tool to manage the collaboration space. Also we need to highlight and rank successful collaborations. We are considering building an online  campaign that uses the components of Widescope to help users converge on budget issues. 

{\bf Theoretical understanding:} An interesting question is to design more efficient voting schemes for faster convergence towards consensus proposals. For example how many triadic groups do we need to sample from a population for us to reach {\em approximate consensus}. Also, there might be other better voting mechanisms apart from our triadic voting. 

{\bf Link with Social Choice and Behavioral Economics:} We believe that further research and user trials will lead to deeper links between novel visualization, social choice theory, and behavioral economics ~\cite{behavioralecon}. For example, what kind of newer agent models might arise by leveraging Widescope for collborative debates on complex subjects and how that might impact the aggregate choices.

\section{Conclusions}
\label{sec:conclusions}

In this paper we presented a web platform called Widescope and showed it can be useful for enabling conversations around complex (and data driven) topics such as budgets using rich data visualization, ability to define one's own budget while choosing between others' proposals or creating one's own, ability to compare between budgets and collaborate in real time with other users. In addition, we showed that this platform has the potential to make social impact by helping users converge their thoughts through the lens of a user trial. 

We  modeled the triadic voting scheme for proposals (where a proposal mapped onto a  single-dimension point) and showed that it is better than the comparitive hot-or-not voting since it centers the probability of winners around those who have moderate proposals, thus satisfying the Condorcet property. This simple formal result is also demonstrated in our user trial.

\section{Acknowledgments}
The authors would like to acknowledge several members of the Widescope team, 
especially Eric Mibuari. Also we would like to acknowledge those who provided 
extremely valuable feedback like Tim O'Reilley, Amir Efrati. This research was supported by Ashish Goel's NSF grants 0904325 and 0947670; by in-kind support from Amazon; and by funds from Cisco, MIcrosoft, and Google.

%
\bibliographystyle{plain}
\bibliography{widescope}  

\begin{thebibliography}{10}

\bibitem{ec2}
Amazon ec2.
\newblock http://aws.amazon.com/ec2/.

\bibitem{facebook}
Facebook.
\newblock http://www.facebook.com.

\bibitem{federalbudget}
The federal budget.
\newblock http://www.federalbudget.com/.

\bibitem{googleplus}
Google plus.
\newblock http://plus.google.com.

\bibitem{hotornot}
Hot-or-not.
\newblock http://www.hotornot.com.

\bibitem{jquery}
Jquery.
\newblock http://jquery.com.

\bibitem{mysql}
mysql.
\newblock http://www.mysql.com.

\bibitem{protoviz}
Protoviz.
\newblock http://mbostock.github.com/protovis/.

\bibitem{realhotornot}
Real hot-or-not.
\newblock http://www.facebook.com/apps/application.php?id=91683468488.

\bibitem{ror}
Ruby on rails.
\newblock http://rubyonrails.org/.

\bibitem{twitter}
Twitter.
\newblock http://www.twitter.com.

\bibitem{widescope}
Widescope.
\newblock http://widescope.stanford.edu.

\bibitem{behavioralecon}
Congressional~Progressive Caucus.
\newblock The people's budget.
\newblock http://en.wikipedia.org/wiki/Behavioral\_economics.

\bibitem{progress}
Congressional~Progressive Caucus.
\newblock The people’s budget.
\newblock http://cpc.grijalva.house.gov/index.cfm?sectionid=70.

\bibitem{wsjblurb}
Amir Efrati.
\newblock Crowdsourcing deficit reduction.
\newblock http://on.wsj.com/n58wMe.

\bibitem{ui:voyager}
Jeffrey Heer, Fernanda~B. Vi{\'e}gas, and Martin Wattenberg.
\newblock Voyagers and voyeurs: Supporting asynchronous collaborative
  visualization.
\newblock {\em Commun. ACM}, 52:87--97, January 2009.

\bibitem{washbudget}
Michael Itti.
\newblock Washington state budget: What are your priorities?
\newblock http://www.educationvoters.org/session-2011/budget/.

\bibitem{knox62}
Robert~E. Knox and Paul~J. Hoffman.
\newblock Effects of variation of profile format on intelligence and
  sociability judgments.
\newblock {\em Journal of Applied Psychology}, 46(1):14--20, 1962.

\bibitem{ui:pathfinder}
Kurt Luther, Scott Counts, Kristin~B. Stecher, Aaron Hoff, and Paul Johns.
\newblock Pathfinder: an online collaboration environment for citizen
  scientists.
\newblock In {\em Proceedings of the 27th international conference on Human
  factors in computing systems}, CHI '09, pages 239--248, New York, NY, USA,
  2009. ACM.

\bibitem{nytimesbudgetpuzzle}
NewYorkTimes.
\newblock Budget puzzle: You fix the budget.
\newblock
  http://www.nytimes.com/interactive/2010/11/13/weekinreview/deficits-graphic.%
html.

\bibitem{budgetchallenge}
NextTen.
\newblock California budget challenge.
\newblock http://budgetchallenge.org.

\bibitem{timespent}
Nielsen.
\newblock What americans do online: Social media and games dominate activity.
\newblock http://goo.gl/CR6f.

\bibitem{pres_budget}
Office of~Management and Budget.
\newblock The president’s budget for fiscal year 2012.
\newblock http://www.whitehouse.gov/omb/budget.

\bibitem{prosperity}
House of~Representatives Committee on~the Budget.
\newblock The path to prosperity: Restoring america's promise.
\newblock http://budget.house.gov/fy2012budget/.

\bibitem{cbo_options}
Congressional~Budget Office.
\newblock Reducing the deficit: Spending and revenue options.
\newblock http://www.cbo.gov/doc.cfm?index=12085.

\bibitem{socialchoice}
Wikipedia.
\newblock Social choice theory.
\newblock http://en.wikipedia.org/wiki/Social\_choice\_theory.

\bibitem{voting}
Wikipedia.
\newblock Voting.
\newblock http://en.wikipedia.org/wiki/Voting\_theory.

\bibitem{condorcet}
Wikipedia.
\newblock Voting.
\newblock http://en.wikipedia.org/wiki/Condorcet\_criterion.

\bibitem{ui:commentspace}
Wesley Willett, Jeffrey Heer, Joseph Hellerstein, and Maneesh Agrawala.
\newblock Commentspace: structured support for collaborative visual analysis.
\newblock In {\em Proceedings of the 2011 annual conference on Human factors in
  computing systems}, CHI '11, pages 3131--3140, New York, NY, USA, 2011. ACM.

\end{thebibliography}
%
%

\end{document}